\newcommand{\scom}{\texttt{SCOM}\xspace}
\newcommand{\scoms}{\texttt{SCOMs}\xspace}
\newcommand{\cbm}{\texttt{CBM}\xspace}
\newcommand{\cbms}{\texttt{CBMs}\xspace}
\newcommand{\cubsel}{\texttt{CUB-Sel}\xspace}
\DeclareMathOperator*{\argmax}{arg\,max}
\DeclareMathOperator*{\argmin}{arg\,min}
\newtheorem{theorem}{Theorem}
\newcommand{\sut}{\quad \text{subject to} \quad}
\newcommand{\argmaxcz}{\argmax_{\mathcal{C} \subset \mathcal{Z}}}
\newcommand{\argmincz}{\argmin_{\mathcal{C} \subset \mathcal{Z}}}
\newcommand{\argmaxciz}{\argmax_{c_i \in \mathcal{Z}}}
\newcommand{\argminciz}{\argmin_{c_i \in \mathcal{Z}}}
\newcommand{\argmaxcic}{\argmax_{c_i \in \mathcal{C}}}
\newcommand{\argmincic}{\argmin_{c_i \in \mathcal{C}}}
\title{Selective Concept Models:\\Permitting Stakeholder Customisation at Test-Time}
\author {
    Matthew Barker,\textsuperscript{\rm 1}
    Katherine M. Collins, \textsuperscript{\rm 1}
    Krishnamurthy (Dj) Dvijotham, \textsuperscript{\rm 2}
    Adrian Weller, \textsuperscript{\rm 1 \rm 3}
    Umang Bhatt \textsuperscript{\rm 1 \rm 3}
}
\begin{document}

\maketitle

\begin{abstract}
Concept-based models perform prediction using a set of concepts that are interpretable to stakeholders. However, such models often involve a fixed, large number of concepts, which may place a substantial cognitive load on stakeholders. We propose Selective COncept Models (\scoms) which make predictions using only a subset of concepts and can be customised by stakeholders at test-time according to their preferences. We show that \scoms only require a fraction of the total concepts to achieve optimal accuracy on multiple real-world datasets. Further, we collect and release a new dataset, \cubsel, consisting of human concept set selections for 900 bird images from the popular CUB dataset. Using \cubsel, we show that humans have unique individual preferences for the choice of concepts they prefer to reason about, and struggle to identify the most theoretically informative concepts. The customisation and concept selection provided by \scom improves the efficiency of interpretation and intervention for stakeholders.
\end{abstract}

\section{Introduction}
Humans can reason about a limited number of concepts at once when making decisions \citep{miller1956magical,luck1997capacity,cowan2001magical}. While concept-based methods such as Concept Bottleneck Models (\cbms) \citep{koh2020concept} have been proposed to support human interpretability and intervenability in machine learning (ML) systems, such models typically involve dozens of concepts, well beyond the number of concepts stakeholders can process at any given time~\citep{tenenbaum1998bayesian,ramaswamy2022overlooked}.

To reduce the cognitive load of reasoning about many concepts, we propose Selective COncept Models (\scoms). \scoms provide a streamlined extension of \cbms by selecting the concepts that are most pertinent to any given task from a larger set of available concepts. This enables a stakeholder to reason with a reduced set of concepts without compromising task accuracy. Unlike \cbms which require a fixed concept set, \scoms make predictions using an arbitrary concept subset which can be customised at inference-time \textit{ without retraining}. For example, one might wish to prohibit consideration of sensitive attributes such as biological sex and subjective attractiveness during prediction. For \scoms, withdrawing certain concepts is trivial, whereas conventional \cbms make such exclusion difficult.

\scoms enable flexible customisation according to a stakeholder's preferences for the number of concepts to use and their personal trade-off between cognitive load and predictive accuracy \citep{ramaswamy2022overlooked}. On the task of bird species recognition, \scoms require only 6 out of 28 concepts to achieve optimal prediction accuracy. Smaller concept sets decrease the human cost of interventions, and increase the impact of each intervention. Thus this work is complementary to research aimed at designing better intervention policies over a given set of concepts, for example CooP \citep{chauhan2022interactive}. Since \scoms place no restrictions on the exact output network architecture, they provide a simple extension to existing models used by practitioners.

\begin{figure*}[t]
    \centering
    \includegraphics[width=1.0\linewidth]{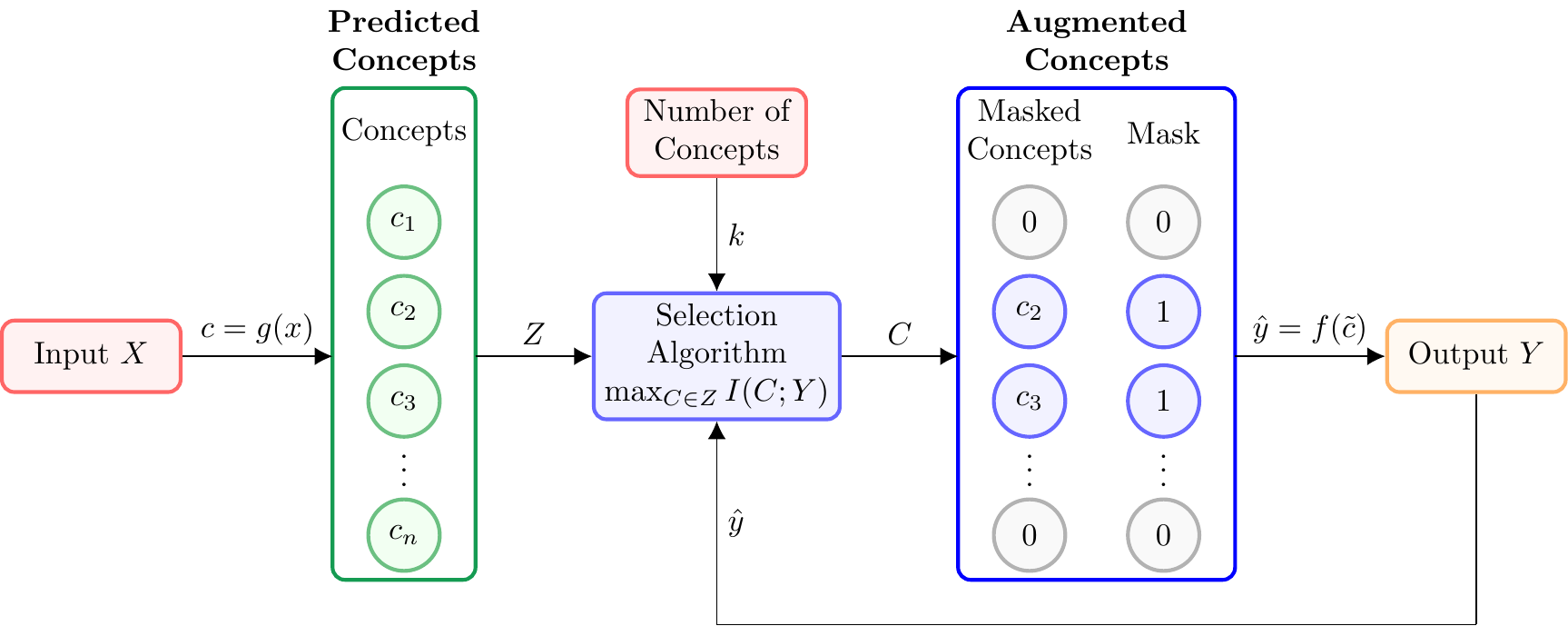}
    \caption{In \cbms, human-interpretable concepts are predicted from the input $X$ and used to infer the output $Y$. \scoms extend \cbms by selecting the most relevant concepts which maximise the mutual information between the selected concepts ($C$) and the output ($Y$). The output model learns to make predictions using an augmented concept vector, which contains the mask used to select concepts. This allows the number of concepts ($k$) and the specific concepts selected to be customised at inference time by the stakeholder without retraining the model.}
    \label{fig:figure1}
\end{figure*}

\section{Contributions}
\begin{itemize}
    \item We show how feature selection can be applied to the problem of concept selection, using mutual information (MI) maximisation to provide ``useful'' selections, even in the presence of duplicated concepts.
    \item We propose \scoms; models which make predictions using an arbitrary subset of concepts, allowing customisation by stakeholders without retraining.
    
    \item We show that \scoms provide comparable accuracy on real-world datasets CUB and CelebA, while using only a fraction of the available concepts. We confirm that \scom selection algorithms outperform existing techniques for feature selection.

    \item We design a human subject experiment to validate the use of \scoms, and gather 900 sets of concept selections, which we release as \cubsel. We identify that people have varied preferences for the number of concepts they select, and favour concepts which are less theoretically informative for the task. This motivates the use of \scoms to adaptively select lightweight concept sets.
    
\end{itemize}

\section{Definitions}
We follow \citep{koh2020concept} in considering concepts as high-level, human-interpretable -- and human-specifiable -- values. We define $\mathcal{Z}$ as the complete set of available concepts, $\mathcal{C} \subset \mathcal{Z}$ as the chosen set of concepts of size $k$, and $\mathcal{Y}$ as the target predictions. \scoms aim to identify the set $\mathcal{C}$ for a given $k$ such that the performance for a mapping $\mathcal{C} \rightarrow \mathcal{Y}$ is comparable to the mapping $\mathcal{Z} \rightarrow \mathcal{Y}$. We largely focus on the setting where $\mathcal{Z}$ is obtained from a Sequential \cbm \citep{koh2020concept}, which predicts the concepts based on inputs $\mathcal{X}$ (such as image pixels). However, \scoms be applied to the inputs $\mathcal{X}$ directly in the case where these inputs are also human concepts. It is assumed that $\mathcal{Z}$ is given for the \scom. We use capitalised letters (e.g. $X$) to denote random variables and lowercase letters to denote instances of a random variable (e.g. $x$).

\section{Comparison to Prior Work}
\label{sec:prior-work}
\subsubsection{Concept-Based Models}
There has been a proliferation of concept-based models which use concepts as intermediaries for output predictions. There are several flavours of these models: those that rely on datasets annotated with concept attributes \citep{koh2020concept,marconato2022glancenets,zarlenga2022concept}, those that learn concepts \citep{alvarez2018towards}, and those that introduce concept features post-training \citep{yuksekgonul2022post}. Most models emphasise the motivation of using human-understandable concepts, yet unlike \scoms, typically ignore the overall size of the concept set. 



\subsubsection{Model Intervenability}
\citep{chauhan2022interactive,shin2022closer} give ways of making interventions in \cbms more effective than random: their work can be used \emph{in conjunction} with \scom. \scom reduces the size of the concept set used for prediction, increasing the impact of each intervention, while still allowing intervention procedures to select the best concept to intervene on. While in practice, stakeholders may prefer to provide soft labels if they are unsure, \citep{collins2023human} show that existing models perform poorly with uncertain interventions.

\subsubsection{Model Interpretability}
There are a wide range of techniques which attempt to improve the interpretability of concept-based models \citep{kim2018interpretability,zhou2018interpretable,ghorbani2019towards,yeh2020completeness,crabbe2022concept,abid2022meaningfully,bai2022concept}. These techniques often provide metrics highlighting the importance of relevant concepts to the output prediction. However, \citep{kim2022help} show with human feedback experiments that concept-based models using numbers for interpretability can be overwhelming and lose relevance to the actual image. \scoms are compatible with these existing interpretability methods, while reducing the number of concepts a human has to consider.



\subsubsection{Feature Engineering and Selection}
Feature engineering, the process of applying a set of hand-crafted transforms to raw input data, is commonly used to increase model accuracy and robustness of ML models \citep{zheng2018feature}, with significant success across domains including text modelling \citep{lewis1992representation} and computer vision \citep{nixon2019feature}. By considering concepts as human-interpretable features, concept selection in \scoms is an application of feature engineering. Information-theoretic methods, commonly used to maximise the statistical dependency between features and the output, are well-suited to the problem of selecting concept sets in \scoms. \citep{fleuret2004fast} have empirical success in approximating the best global feature set using conditional MI.

 Instead of selecting one feature set for the whole dataset, instance-level techniques can be used to identify custom feature sets for each datapoint. For example, Learning to Explain (L2X) \citep{chen2018learning} trains a neural network to predict a variational approximation to MI, and uses a sampler to select the feature set of a given size which maximises this quantity. Unlike \scoms, existing techniques either provide no control over the feature set size (e.g. INVASE \citep{yoon2019invase}), or require the size to be fixed during training (e.g. L2X). To the best of our knowledge, \scoms are the first models which allow customisation for instance-level selection \emph{without retraining}.


\section{Approach}
\label{sec:approach}
In this section, we provide a high-level overview for each of the four key stages in \scom: 
\begin{enumerate}
    \item \textbf{Concept model (training).} Train a model $x \rightarrow c$ which predicts the concepts ($c$) present in each input ($x$):
    \begin{equation}
        c = g(x)
    \end{equation}
    For some datasets, the concepts may be the input $x$ itself.
    \item \textbf{Output model (training).} Train a neural network which makes predictions using any arbitrarily chosen concept set $\mathcal{C} \in \mathcal{Z}$. The original concept vector is augmented to give $\tilde{c}$ according to the chosen subset, which is then used for prediction:
    \begin{equation}
        \hat{y} = f(\tilde{c})
    \end{equation}
    \item \textbf{Concept selection (test).} Select the best concept set $\mathcal{C} \subset \mathcal{Z}$ for each size $k$, using a greedy algorithm which maximises MI between the concepts and the outputs:
    \begin{equation}
        C_s = \argmaxcz I(Y;C) \sut |C| = k
    \end{equation}
    The MI is approximated using the trained network in Step 2. 
    \item \textbf{Inference (test):} A value of $k$ is chosen and inference is performed using the selected concept set of size $k$:
    \begin{equation}
        y^* = f(\tilde{c}_s)
    \end{equation}
    Stakeholders may customise the concept set depending on their individual preferences.
\end{enumerate}

\scom addresses two main issues: \emph{how to select a concept set for predictions}, and \emph{how to perform inference without retraining the model.} These two problems are interlinked; MI estimation in Step 3 necessitates inference with a concept set to avoid retraining, while test-time prediction requires concept selection to find an efficient concept set for a given $k$. Since concept selection is at the instance level, it occurs at test time. Further details on each stage are provided next.

\subsection{Training the Concept Model}
\label{sec:concept-model}
The concept model predicts the concepts present in each input. \scoms impose no restrictions on the details of this model, and only require concept predictions for training further stages and inference. As a result, the concept model may take on a variety of forms:
\begin{itemize}
    \item \textbf{Sequential Bottleneck.} Using human-annotated datapoints (input $x$, concepts $c$), a sequential bottleneck \citep{koh2020concept} trains a model $x \rightarrow c$ which predicts the concepts present in each input using a neural network. This is a direct application of multi-label classification in the supervised learning setting, which is well-studied in the literature \citep{zhang2013review}.
    \item \textbf{Data processing.} In cases when it is not possible to obtain annotated concepts for each datapoint, data-processing and augmentation techniques may be used to predict concepts. This is common in the computer vision domain \citep{perez2017effectiveness}, although it is widely used across other domains as well \citep{kotsiantis2006data}.
    \item \textbf{Direct Inputs.} For certain datasets, such as tabular data, the inputs may already be human-interpretable and thus considered as concepts directly. In these cases, the concept model for \scom will simply be the identity function, i.e. $\hat{c} = x$.
\end{itemize}
The further stages below assume the concepts have already been obtained from each datapoint.

\subsection{Training the Output Model}
\label{sec:output-model}
The output model predicts the output label using \emph{any concept set} $\mathcal{C} \subset \mathcal{Z}$. In comparison to the concept model, there is a lack of prior work on models which can perform inference using arbitrary predictions. \citeauthor{yoon2019invase} attempt to address this challenge by ``masking out'' any unused inputs by replacing their values with zeros. However, this approach prevents the network from distinguishing between ``masked zeros'' and input zeros.

\subsubsection{Concept Augmentation}
Instead, \scoms augment the concept vector with the binary mask vector used to mask unused concepts, shown in Figure \ref{fig:figure1}. Thus the network has enough information to identify which zeros are masked, and which are genuine inputs. The mask is applied to the concept vector $c$ to give the augmented vector:
\begin{equation}
    \tilde{c} = \begin{bmatrix} c \odot m \\ m \end{bmatrix}
\end{equation}
Where $\odot$ denotes the elementwise Hadamard product. For example, consider the case with 3 concepts, and concepts 1 and 3 are selected. In this case, the concept vector, mask, and augmented concept vector are given below:
\begin{equation}
    c = \begin{bmatrix}
        \textcolor{red}{c_1} \\
        \textcolor{red}{c_2} \\
        \textcolor{red}{c_3}
    \end{bmatrix}, \qquad
    m = \begin{bmatrix}
        \textcolor{blue}{1} \\
        \textcolor{blue}{0} \\
        \textcolor{blue}{1}
    \end{bmatrix}, \qquad
    \tilde{c} = \begin{bmatrix}
        \textcolor{red}{c_1} \\
        \textcolor{red}{0} \\
        \textcolor{red}{c_3} \\
        \textcolor{blue}{1} \\
        \textcolor{blue}{0} \\
        \textcolor{blue}{1}
    \end{bmatrix}
\end{equation}
For multi-dimensional concepts, each value in the concept is multiplied by the corresponding mask value. Thus, \scoms can select concepts of any dimension, unlike many existing methods. Once the model has constructed $\tilde{c}$ from the chosen subset, the output model makes predictions using $\tilde{c}$ as input:
\begin{equation}
    \hat{y} = f(\tilde{c})
\end{equation}
The function $f$ is complex and can be approximated using a neural network, the training of which is described below.

\subsubsection{Training the Model}
The output model learns a mapping from all possible subsets of $\mathcal{Z}$ to the output $\mathcal{Y}$, i.e. $\mathcal{P}(\mathcal{Z}) \rightarrow \mathcal{Y}$ where $\mathcal{P}$ denotes the power set. To achieve this, a new mask $m$ is sampled for each batch, which then augments the concept vector. The predictions are used to calculate the chosen loss (e.g. cross-entropy), and the model parameters $\theta$ are updated using Stochastic Gradient Descent (SGD). The training procedure is detailed in Algorithm \ref{alg:network} below.

\begin{algorithm}
\caption{Training output model}\label{alg:network}
\begin{algorithmic}[1]
\For{Batch $B$ in $[B_1,...,B_N]$}
\State $m \sim p(m)$ \Comment{Sample a mask $m$}
\State $c \gets c \odot m$ \Comment{Hadamard Product $\odot$}
\State $\tilde{c} \gets [c, m]$ \Comment{Append mask to input vector}
\State $\hat{y} = f(\tilde{c})$ \Comment{Predictive function $f$}
\State $\theta \gets SGD(\hat{y}, y, \theta)$ \Comment{Update $\theta$ using SGD}
\EndFor
\end{algorithmic}
\end{algorithm}

It is illustrative to compare the \scom output model which learns $\mathcal{P}(\mathcal{Z}) \rightarrow \mathcal{Y}$ to a \cbm which learns $\mathcal{Z} \rightarrow \mathcal{Y}$. Since \scom requires a more complex mapping than \cbm it requires greater model complexity in the form of larger/additional hidden layers. In addition, the concept augmentation doubles the size of the input layer in comparison to a \cbm.

The probability distribution $p$ used to sample each mask needs to be determined when training the model. To minimise the expected loss, the distribution of the mask used during training should match the true mask distribution. Consider rewriting $p(m)$ using the dependence on $k$:
\begin{align}
    p(m) &= \sum_{k=1}^n p(m, k) \\
    &= \sum_{k=1}^n p(m | k) p(k)
\end{align}
During training, $p(m | k)$ and $p(k)$ are unknown and for \scoms we assume both distributions are uniform. The two step sampling procedure is as follows:
\begin{enumerate}
    \item Sample $k \sim U(1, k)$
    \item Randomly set $k$ values of mask $m$ to 1, with equal probability.
\end{enumerate}
Importantly, this ensures the output model learns to make predictions using all possible concept set sizes equally. In cases where the distribution of $k$ is known prior to training, using this distribution when sampling $k$ may achieve faster convergence and a lower expected loss. However, throughout this work $k$ is assumed to be uniformly distributed.

\subsection{Concept Selection}
\label{sec:concept-selection}
We define \emph{concept selection} as the method of choosing the concept set in a way that balances the number of concepts used for a given task and downstream task performance. \citep{ghorbani2019towards} propose that chosen concepts should satisfy the three desiderata of meaningfulness, coherency and importance. In the case of \scoms, concepts are provided as dataset annotations which are chosen by humans to be meaningful and coherent. However, the importance of a concept determines how necessary it is to make accurate predictions and is task dependent.

We use MI between the concepts and the labels, $I(Y;C)$, as a measure of the collective prediction accuracy of the concept set. Using an MI measure takes into account the correlations between concepts, rather than ranking every concept in isolation. For example, while concepts such as ``belly colour" and ``wing colour" might both be important individually for predicting the bird species, they are likely to be strongly correlated. Therefore, only one concept may be required to make accurate predictions. Precisely, the task of selecting $k$ concepts can be framed as one of maximising MI subject to a cardinality constraint \citep{fleuret2004fast}:
\begin{equation}
    \label{eqn:cmim}
    \argmaxcz I(Y;C) \sut |C| = k
\end{equation}

The naive approach of trying every subset of $\mathcal{Z}$ with size $k$ suffers from combinatorial explosion and is unfeasible. However, we can achieve a \emph{good} solution in polynomial time using greedy algorithms \citep{nemhauser1978analysis}. There are two common greedy algorithms: forward selection (FS) and backward elimination (BE). FS starts with an empty set $\mathcal{C} = \emptyset$, and then sequentially adds individual concepts $c \in \mathcal{Z} $, maximizing $I(Y; C \cup c_i)$ at every stage. In comparison, BE starts with the complete set $\mathcal{C} = \mathcal{Z}$, and removes individual concepts $c \in \mathcal{C}$, maximizing $I(Y; C \setminus c_i)$ at every stage. The two procedures are detailed in Algorithms \ref{alg:forward-selection} and \ref{alg:backward-elimination}.

\begin{algorithm}
\caption{Forward selection procedure}
\label{alg:forward-selection}
\begin{algorithmic}[1]
\State $Z \gets \{c_1, c_2,...\}$ \Comment{Complete set of available concepts}
\State $C \gets \emptyset$ \Comment{Start with empty set}
\While{$|C| \leq k$}
\State $c_{\text{next}} \gets \argmaxciz I(Y;C \cup c_i)$ \Comment{Maximise MI}
\State $C \gets C \cup c_{\text{next}}$
\EndWhile
\end{algorithmic}
\end{algorithm}

\begin{algorithm}
\caption{Backward elimination procedure}
\label{alg:backward-elimination}
\begin{algorithmic}[1]
\State $Z \gets \{c_1, c_2,...\}$ \Comment{Complete set of available concepts}
\State $C \gets Z$ \Comment{Start with complete set}
\While{$|C| \geq k$}
\State $c_{\text{next}} \gets \argmaxcic I(Y;C \setminus c_i)$ \Comment{Maximise MI}
\State $C \gets C \setminus c_{\text{next}}$
\EndWhile
\end{algorithmic}
\end{algorithm}

Due to the incremental nature of greedy selection, every concept set of size $ \leq k$ (for FS) and $ \geq k$ (for BE) is also obtained when finding the concept size of size $k$. This allows concept sets of all sizes to be calculated by running the selection procedure once for $k = |Z|$ (FS) and $k=0$ (BE). Since \scoms place \emph{no constraint on the dimensionality of concepts}, calculating $I(Y; C)$ requires an approximation when concepts are continuous. Existing methods for estimating MI between high-dimensional continuous variables such as the kernel density estimator \cite{kolchinsky2017estimating} or k-nearest neighbours method \cite{kraskov2004estimating} are sensitive to exact parameters and are not well suited to distributions of concept logits, which tend to be bimodal for binary concepts. To avoid this issue, rather than maximising MI directly, in Theorem \ref{theorem:output-entropy} we show that minimising the output entropy $H(\hat{Y}|C)$ instead gives the same optimal concept set. Unlike MI, output entropy is simple to calculate from the model predictions:
\begin{equation}
    H(\hat{Y}|C) = -\sum_{y \in \mathcal{Y}} p(\hat{y}|c) \log p(\hat{y}|c) 
\end{equation}
Using entropy as a proxy for MI, the maximisation at each stage in FS (Algorithm \ref{alg:forward-selection}) can be solved using:
\begin{equation}
    c_{\text{next}} = \argmaxciz I(Y;C \cup c_i) = \argminciz H(Y | C \cup c_i)
\end{equation}
Similarly, for BE (Algorithm \ref{alg:backward-elimination}) the optimisation at each stage is given by:
\begin{equation}
    c_{\text{next}} = \argmaxcic I(Y;C \setminus c_i) = \argmincic H(Y | C \setminus c_i)
\end{equation}

\subsection{Inference}
\label{sec:inference}
A core property of \scoms is that predictions can be made from arbitrarily chosen concept sets -- \textit{without retraining the model}. Once the output model has been trained (Step 2) and the concept set has been selected (Step 3), inference is simple and follows a three stage procedure:
\begin{enumerate}
    \item The concepts $c$ are predicted from the test datapoint $x^*$:
    \begin{equation}
        c = g(x^*)
    \end{equation}
    \item A subset of concepts, $c_s$, is chosen according to the selection algorithm, and optionally customised by a stakeholder.
    \item The concept subset is augmented to give $\tilde{c}_s$ and used to infer the output:
    \begin{equation}
        y^* = f(\tilde{c_s})
    \end{equation}
\end{enumerate}
\section{Computational Experiments}
\label{sec:real-datasets}
To validate the \scom selection algorithms compared to existing methods, we assess the performance of \scoms on datasets for concept-based models popular in the literature: the CUB dataset for bird species recognition~\citep{welinder2010caltech} and the CelebFaces Attributes Dataset (CelebA) \citep{liu2015faceattributes} for facial identity recognition. For both datasets, a multi-layer perceptron with two hidden layers, each containing 100 neurons, was trained for the output model. Using all the concepts, a prediction accuracy of 75.3\% and 63.8\% was achieved for CUB and CelebA respectively. Further details for each dataset are provided below.

\subsubsection{CUB}
\label{sec:CUB}
We follow \citep{koh2020concept} in using only 112 of the 312 original CUB binary attributes (e.g. ``has blue wings", ``has a pointed beak"). These attributes are further categorised into the 28 multi-valued concepts (e.g. ``wing colour'') used in the original crowdsourced annotation by \citep{welinder2010caltech}. For the purpose of intervention and interpretation, one multi-valued concept more accurately represents a human ``concept" than the binary attributes. Matching \citep{koh2020concept}, the concept labels are fixed at the class-level, so that each bird from the same species is labelled with the same concept annotations. This property, along with large number of concept groups, makes CUB particularly well-suited for \scoms. Further, the same predicted concept logits for the Sequential bottleneck were used when training and evaluating \scom performance, which achieve a concept accuracy of 96.6\%.

Since the CUB concepts are multi-valued, methods which select individual values (e.g. L2X) aren't suitable. The prediction accuracy of randomly selected concepts is used as a baseline. Errors are reported over 3 random seeds which determine the concept logit predictions, with the train/test split given by \citep{wah2011caltech}.

\subsubsection{CelebA}
CelebA \citep{liu2015faceattributes} is a dataset with 40 concept annotations for 200K celebrity faces from 10K people, which has been used to evaluate concept models \citep{ghorbani2019towards,zarlenga2022concept}. For computational efficiency, we subsample CelebA by using random subsets containing 20 identities for the prediction task. Unlike the processed version of CUB, attributes for CelebA are determined at the image level, meaning different images of the same person may have different concept annotations. Although the annotations are noisy, attributes such as ``wearing a hat" cannot be sensibly fixed at the class level. Thus CelebA is used to investigate the suitability of \scoms for instance-level concept annotations. A ResNet-50 \citep{he2016deep} was trained on all 10k identities to predict the attributes for each face, with an accuracy of $0.916$.

Unlike CUB, the concepts are single-valued attributes, and thus L2X was used as a benchmark. Since L2X cannot make predictions directly, concepts are selected by L2X and then used as inputs to the trained \scom output model. Errors are reported over 5 random seeds which determine the samples taken from the dataset.

\subsection{Results}
\subsubsection{Prediction Accuracy}
\begin{figure*}[t!]
    \centering
    \includegraphics[width=0.49\linewidth]{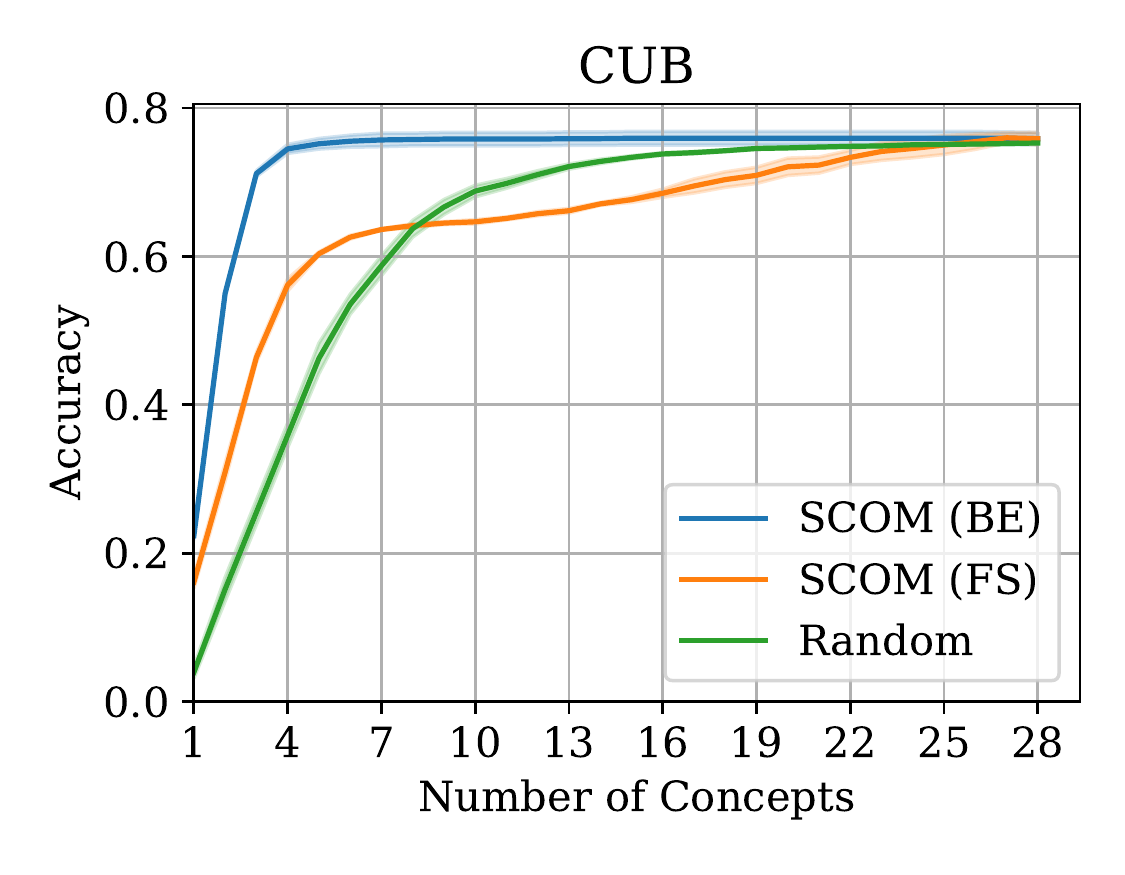}
    \includegraphics[width=0.49\linewidth]{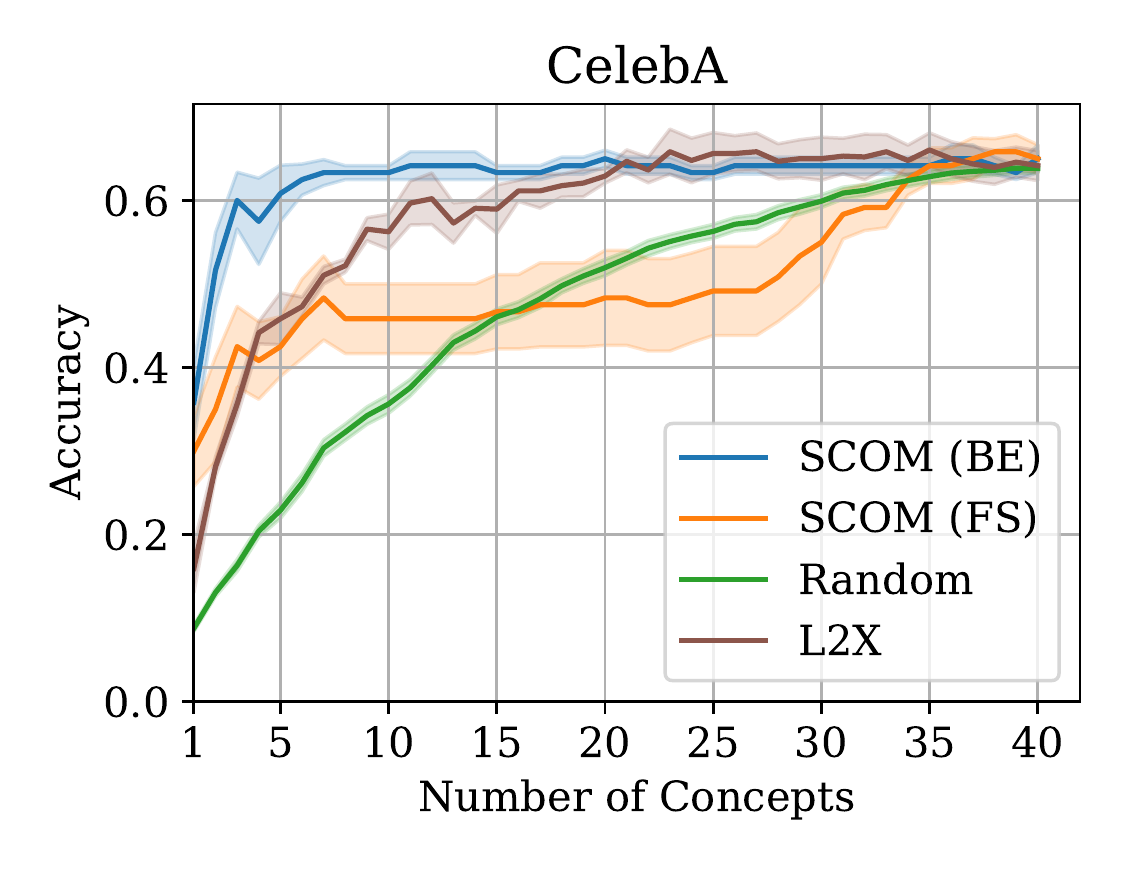}
    \caption{Prediction accuracy of \scoms on the CUB and CelebA datasets as a function of $k$, compared to L2X and random baselines. On CUB (left), BE \scom performs significantly better than FS \scom at the instance level. On CelebA (right), BE \scom performs much better than FS \scom, and better than L2X for low numbers of concepts. For all plots, errors are $\pm 1\sigma$, calculated over random seeds which affects the concepts randomly selected, the concept model chosen for CUB, and the dataset sample for CelebA.}
    \label{fig:prediction-accuracy}
\end{figure*}
\begin{figure*}[t!]
    \centering
    \includegraphics[width=0.49\linewidth]{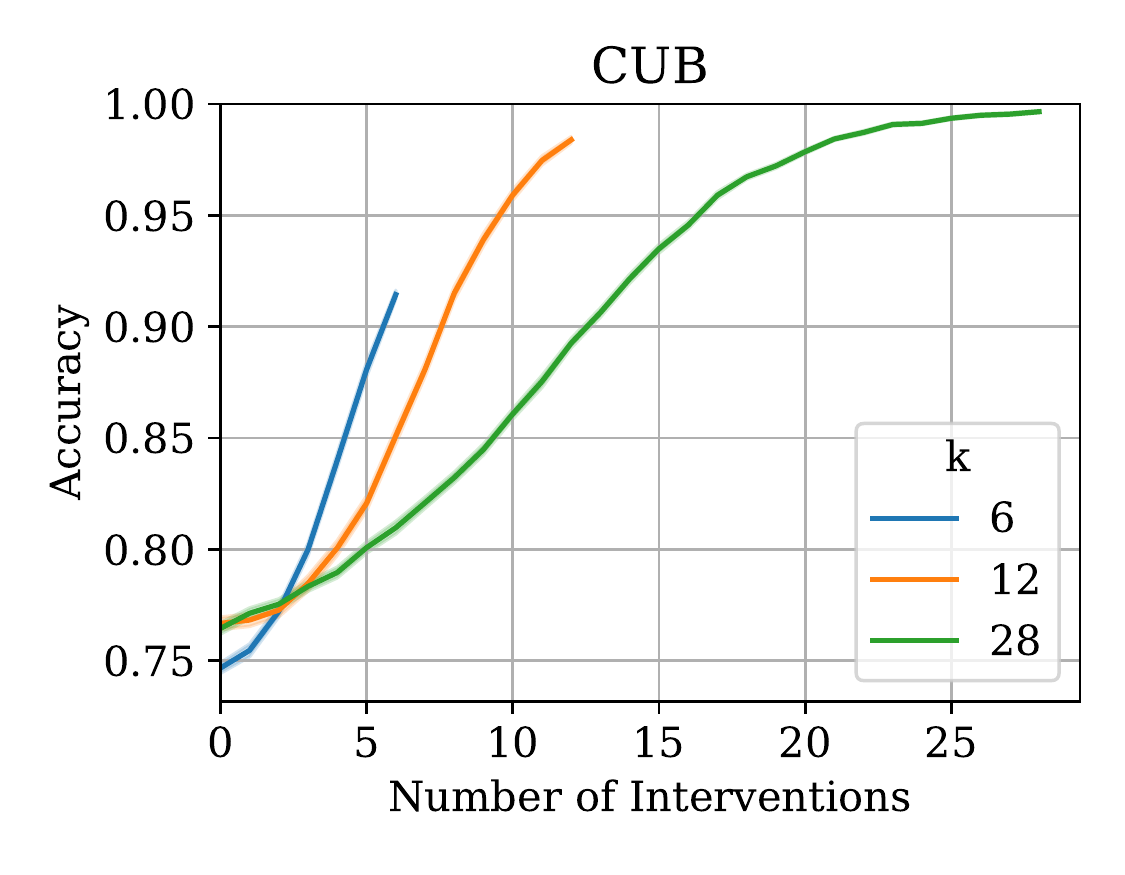}
    \includegraphics[width=0.49\linewidth]{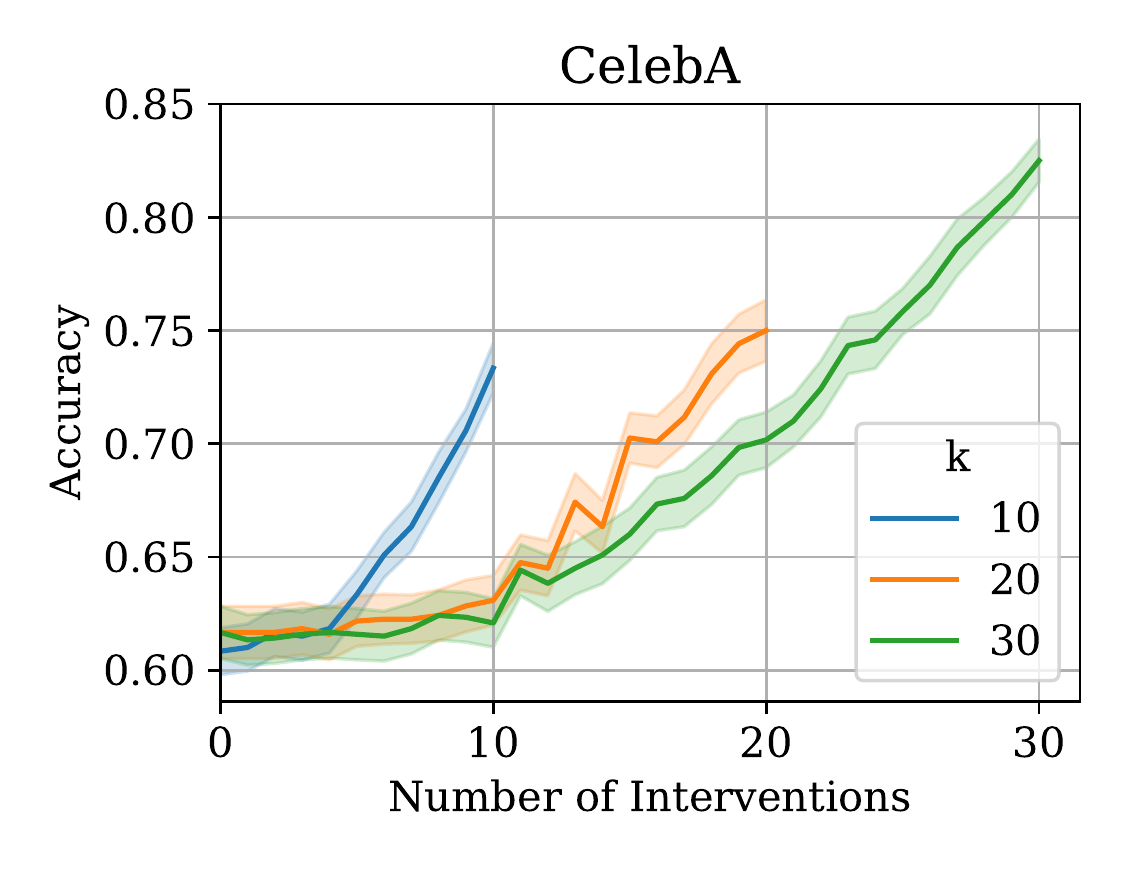}
    \caption{Prediction accuracy for CUB and CelebA vs number of oracle interventions, for certain values of $k$. Smaller concept sets have a greater increase in accuracy after the same number of interventions. Errors are $\pm 1 \sigma$, calculated over 10 random seeds which determines the concepts that are intervened.}
    \label{fig:intervene-acc}
\end{figure*}
\begin{figure*}[t!]
    \centering
    \includegraphics[width=\linewidth]{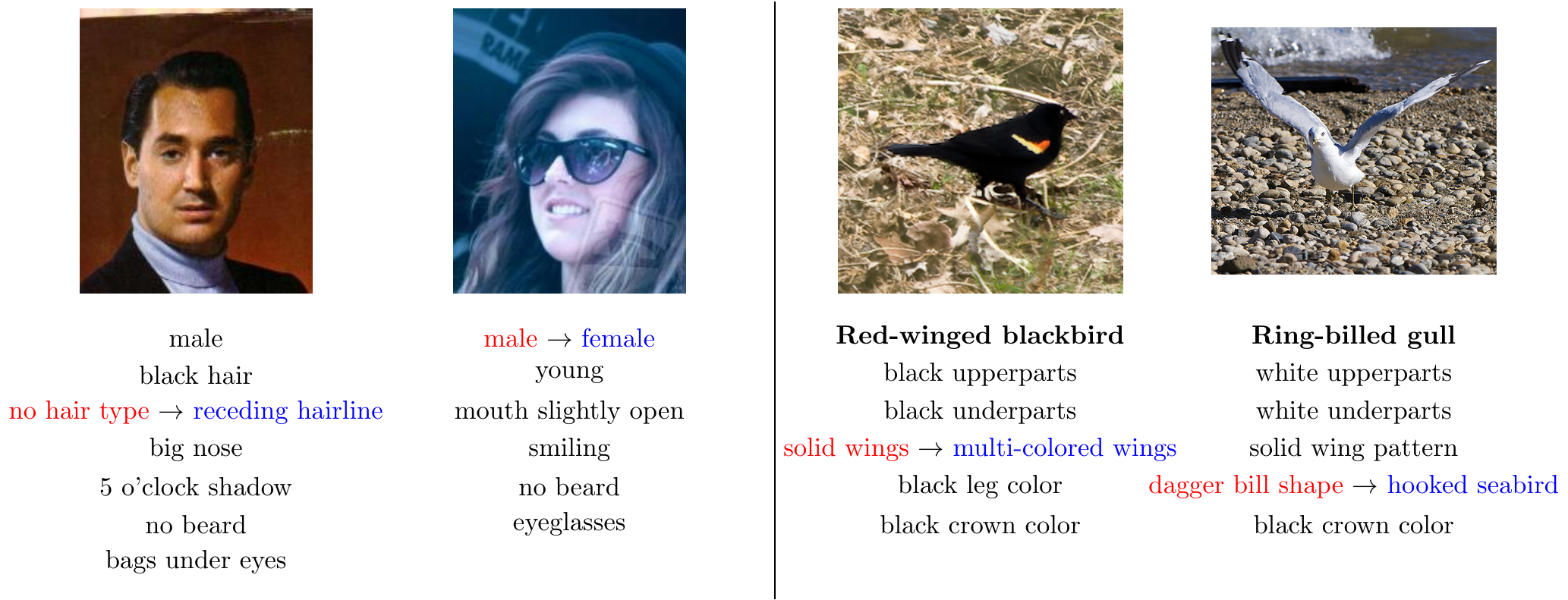}
    \caption{Example interventions using selected concept sets from CelebA (left, 8 concepts) and CUB (right, 6 concepts). The concept logit predictions are rounded to give binary predictions, and only the positive concepts are shown. Each intervention changes one concept which corrects the original incorrect model prediction. When using selected concept sets, it is much easier for experts to identify incorrect concepts and reduces the human cost of interventions.}
    \label{fig:intervene}
\end{figure*}

We first investigate how task accuracy compares between \scoms, which permit flexible test-time intervention \textit{without retraining} against the L2X baseline, which retrains a new model for each value of $k$. The results are shown in Figure \ref{fig:prediction-accuracy} and numerical values are provided in Tables \ref{tab:cub} and \ref{tab:celeba}.

For CUB and CelebA, the BE methods perform better than the greedier FS methods. On CUB, BE \scoms perform particularly well, achieving the optimal accuracy of 76.1\% when $k=6$. In addition, BE \scom outperforms the L2X baseline on CelebA for low numbers of concepts, with approximately equal performance for large concept sets.

\subsubsection{Intervention Accuracy}
One of the key benefits of \cbms is that they allow interventions from a human expert at the concept level to improve the accuracy of the model. To test this property for \scoms, concepts were replaced by predictions from an oracle, chosen randomly for simplicity. We assume that human experts are able to identify the true value of the concepts with perfect accuracy. The task of intervening on concepts is separate from the concept selection procedure performed in the human experiment.

The initial concept sets were selected using the BE variant of \scoms, since that gives the greatest prediction accuracy for both CUB and CelebA. For CUB, the oracle provides binary values for each concept, which are determined by the species that the bird belongs to. For CelebA, there is no clear class-level oracle, and individual image annotations are too noisy to be considered an oracle. Following work on human-derived soft annotations \citep{peterson2019human, Collins_Bhatt_Weller_2022}, we assume that the CelebA oracle intervenes on each concept with the ``correct uncertainty'', estimated as the mean value of that concept over all images of the same person. While we acknowledge that this oracle is unrealistic, we aim to show the efficacy of interventions assuming they can be obtained. Though beyond the scope of this work, we refer to ~\citeauthor{collins2023human} for an exposition of why such an oracle-intervention assumption may not always be wise in practice.

Figure \ref{fig:intervene-acc} highlights that the intervention accuracy increases more with each intervention for smaller concept set sizes ($k$). After only a few interventions, the accuracy gain for small $k$ overcomes the initial accuracy penalty compared to the full concept set. As each concept in a small concept set has a greater impact on the overall prediction, it is intuitive that each intervention yields a larger effect. Examples of interventions are shown in Figure \ref{fig:intervene}, where a single oracle intervention corrects the initial prediction made by the model. Since oracle interventions are more effective for smaller concept sets, \scoms exhibit greater intervenability compared to existing concept models. 

\subsection{Takeaways}
On the real-world datasets CUB and CelebA, \scoms perform better than existing methods. BE methods are much more accurate than FS, and require a fraction of the number of concepts for optimal performance.

\section{Human Experiment}
\label{sec:human-experiment}

\begin{figure}[t!]
    \centering
    \includegraphics[width=\linewidth]{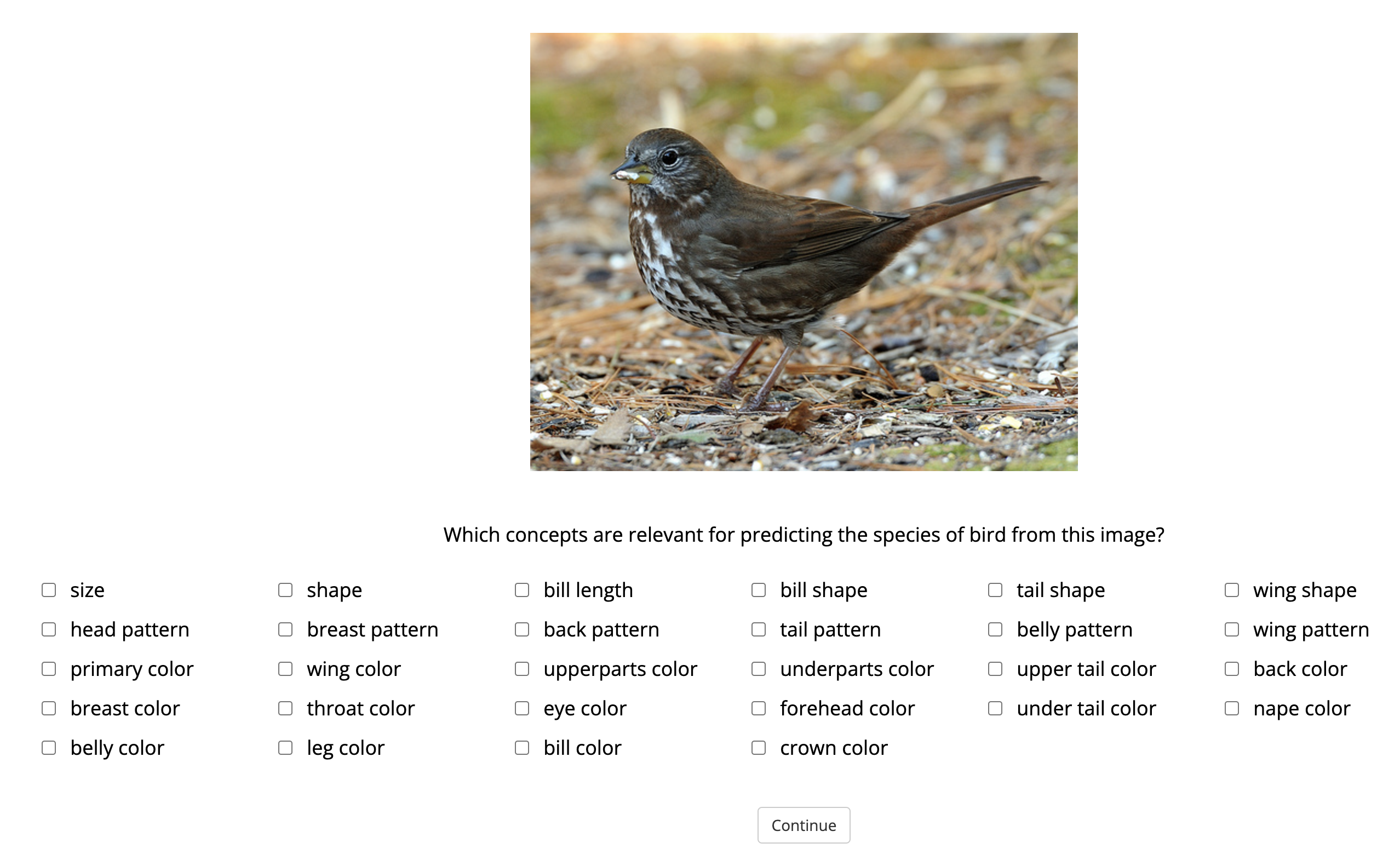}
    \caption{Interface for the human experiment. Participants were shown an image of a bird, and then asked to select the relevant concepts from the list of 28 total concepts.}
    \label{fig:interface}
\end{figure}
\begin{figure*}[t!]
    \centering
    \includegraphics[width=0.9\linewidth]{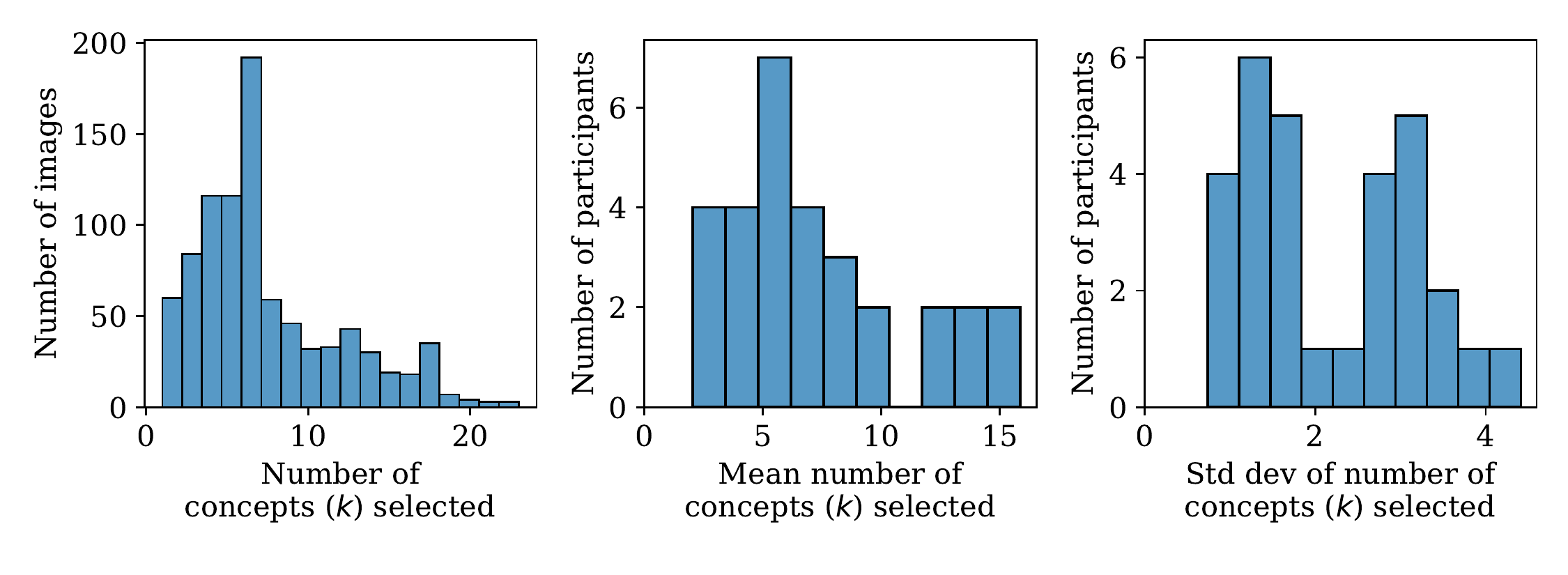}
    \caption{\textbf{Left:} Distribution of number of concepts selected for each image (left) and the mean number of concepts selected by each participant (center). Both distributions have a large variance, showing the number of concepts humans prefer to reason varies. The standard deviation of the number of concepts selected by each participant (right) is comparatively large, motivating the instance-level selection \scom provides.
}
    \label{fig:k-distribution}
\end{figure*}
\begin{figure*}[t!]
    \centering
    \includegraphics[width=\linewidth]{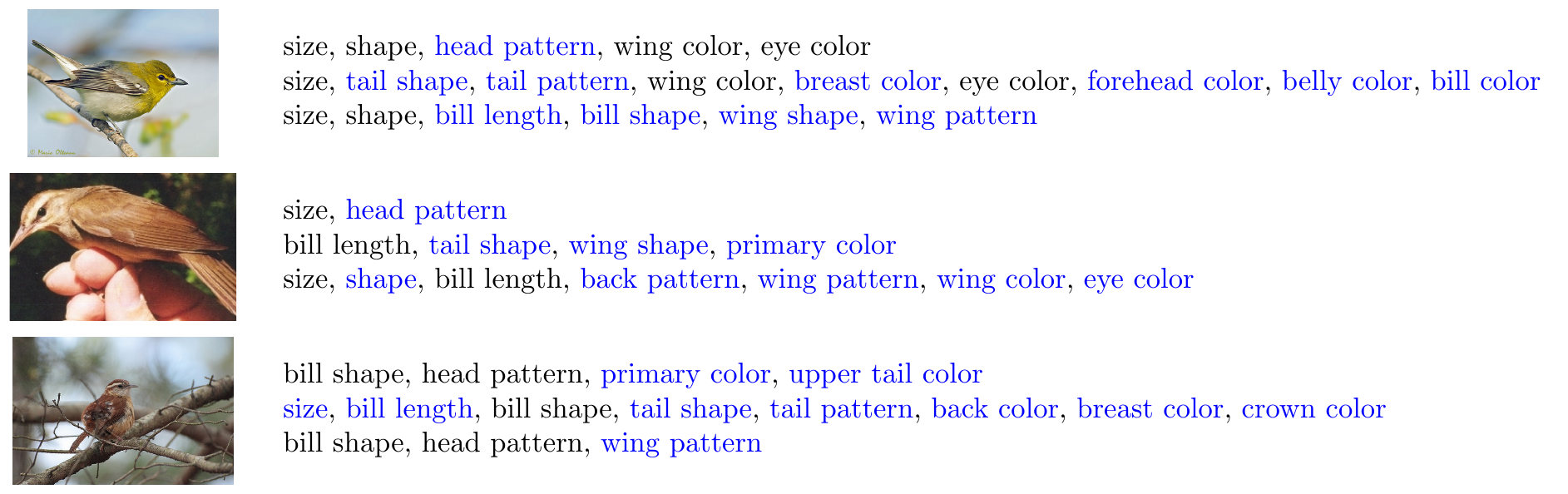}
    \caption{Example human selections of the most relevant concepts. Here, we depict cases where three participants labelled the same image. Concepts with at least one disagreement are shown in \textcolor{blue}{blue}, and illustrate that people have varied individual preferences.}
    \label{fig:duplicated}
\end{figure*}
\begin{figure*}[t!]
    \centering
    \includegraphics[width=0.9\linewidth]{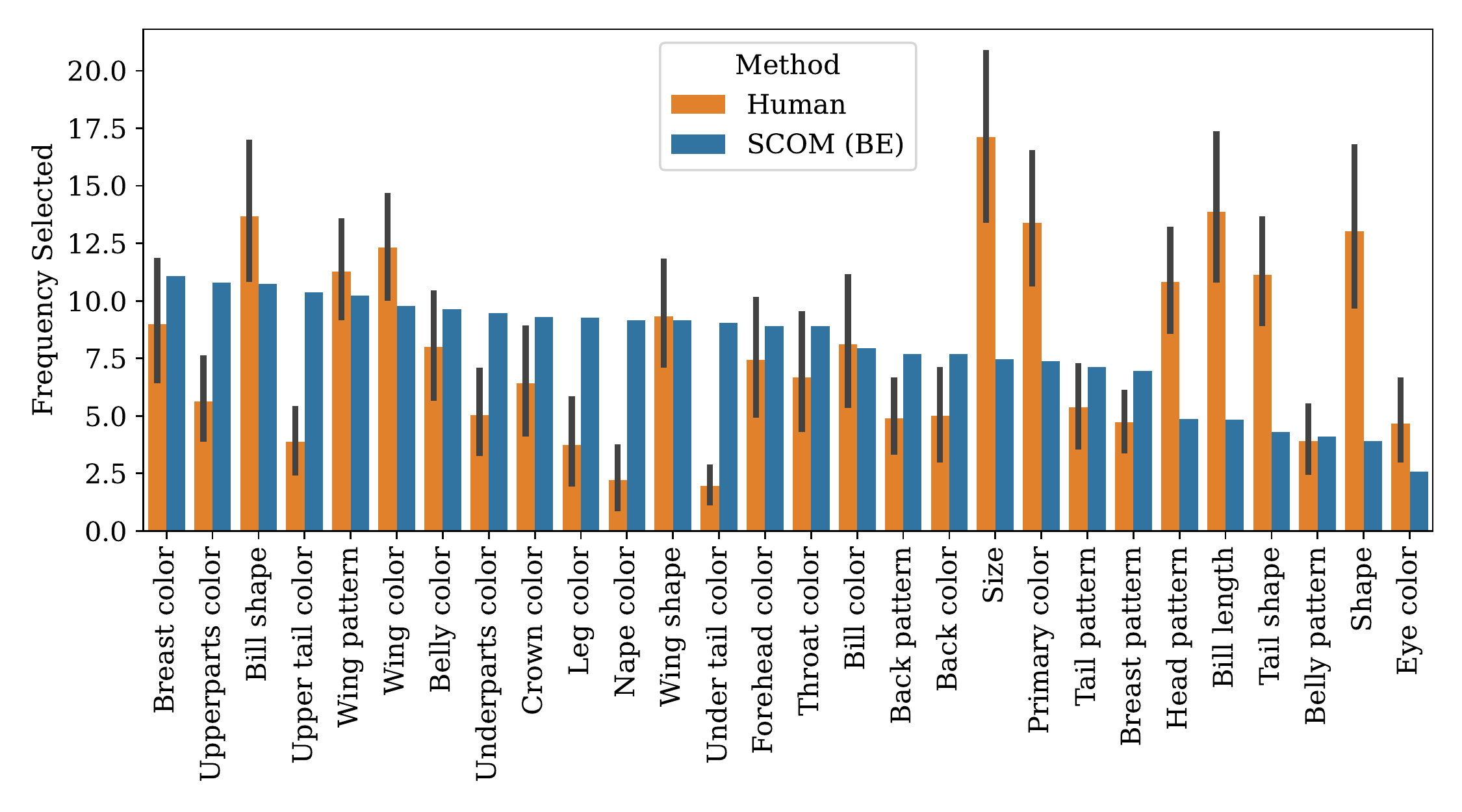}
    \caption{Bar chart comparing the frequency of concepts selected by participants to the selections by the BE \scom. Humans prefer concepts which \scoms rarely select, emphasising the importance of customisation for the concept sets selected by \scom. Errors bars of $\pm 1 \sigma$ are provided for the human selections.}
    \label{fig:team-selection}
\end{figure*}
Our computational results demonstrate the methodological prowess of \scoms. However, we care principally about \scoms to \textit{work with and support real humans}. One of the key motivations for \scoms is that they allow stakeholder customisation of concept sets \emph{at inference time}. The need for customisation rests on the assumption that humans have varied preferences for the number of concepts they prefer to reason with ($k$). Further, the desire for algorithmic selection of those concepts assumes that people may not always choose the most informative concepts on their own. We validate these two assumptions through a human subject experiment, wherein participants were asked to select the concept groups for each image. We release the resulting annotations in a new dataset, \cubsel.


\subsubsection{Experimental Details}
We recruited 30 US-based participants from the crowd-sourcing platform Prolific \citep{palan2018prolific}. Participants were asked to select as many concepts as they thought were relevant when predicting the bird species for each image. Selection implicitly captures their preference for the number of concepts to reason with ($k$). Each participant was shown 30 images, yielding a total of \textbf{900 concept set selections} (which compose \cubsel). The images were randomly sampled from the CUB test dataset, with a total of 28 possible concepts for each image. Participants were paid at a base rate of \$8/hour, with a bonus rate of up to \$9/hour to encourage high-quality selections. The study interface is shown in Figure \ref{fig:interface}. Further details on CUB are provided in the Computational Experiments.

\subsection{Results}
\subsubsection{Preferred Concept Set Sizes}
To motivate the customisation \scoms provide, we first investigate the distribution over $k$, i.e., the number of concepts people selected. Figure \ref{fig:k-distribution} shows the distribution of $k$ values both for each image and for each participant. A participant's $k$ was calculated as the mean $k$ selected over the 30 images they were shown. Both distributions have a large variance, showing that the number of concepts humans prefer to reason about varies significantly -- across people and individual images. There was a mean difference of 9.6 concepts when several participants labelled the same image, emphasising that different people prefer to reason with different concepts. Examples of multiple concept selections for the same image are shown in Figure \ref{fig:duplicated}. Thus, the customisation \scom provides is essential to cater for individual stakeholder preferences.

\subsubsection{Comparison Between Humans and \scoms}
\scoms assume that humans struggle to select the relevant concepts from a larger set. To evaluate the quality of human selections, we computed the accuracy of the \scom output model on CUB using human selections and define this as the ``human accuracy''. Importantly, although the concept sets are chosen by humans, the \scom model is still used for inference. Thus, the ``human accuracy'' is a measure of how informative concepts are \emph{for the model}; the information concepts provide may differ if people were responsible for classification instead.

We compare the human accuracy to \scom and L2X instance-level selection methods in Figure \ref{fig:human-acc}. Remarkably, the human accuracy is lower than random for all $k$ values and far lower than the BE \scom algorithms. To explore human versus model selection further, we compare the frequency of concepts selected by humans to the selections by a \scom\footnote{We use BE as the \scom baseline, evaluated on the same images shown to participants (which represent a subset of the test set).} in Figure \ref{fig:team-selection}. There is a significant difference between the distribution of concepts selected by humans vs \scom. Concepts such as ``Shape'' and ``Bill length'' are frequently selected by humans, but not favoured by \scom. Concepts which are not specific to bird species recognition are easier to understand and thus treated preferentially by humans, even if they are not the most informative concepts. Challenging concepts such as ``Nape color'' and ``Under tail color'' are rarely selected by humans, and some participants noted that they struggled to understand these concepts.

\subsection{Takeaways}
The human subject experiment shows that humans have individual $k$ values which vary significantly, motivating the test time customisation \scoms provide. Further, humans favour concepts which are less theoretically informative, as measured by \scom selections. Consequently, the accuracy of model predictions when using human selection is significantly lower than \scom selection.

\section{Discussion}
We investigated the performance of \scoms compared to random and L2X baselines, for two real world datasets. The selection algorithm uses MI to select the best concept set of a given size at either the dataset or instance level, and is robust to correlations between concepts. For CUB, prediction accuracy is far superior to random and human selections. For CelebA, \scoms outperform L2X, especially for low numbers of concepts. After a few interventions, predictions using the selected subset outperforms those using all concepts. Through our human subject experiment, we show that humans have different individual preferences for the number and type of concepts they prefer to reason about, motivating the value of \scoms. By avoiding retraining on different concept sets, \scoms allow stakeholders flexibility to customise the concepts used to make predictions \textit{after the model has been trained}. Next, we discuss implications of the results and potential limitations of \scoms.

\subsection{Choosing the Concept Set Size $k$ in Practice}
When \scoms are used by real stakeholders, there are several methods of choosing the concept set size $k$. A practitioner could employ a validation set and choose $k$ to either maximise the accuracy, or as the minimum value which surpasses a certain accuracy threshold. Alternatively, $k$ may be chosen according to the preferences of the experts who are intervening on the concepts.

\subsection{Human vs Algorithm Concept Preferences}
Our human experiment elucidated that humans and \scom selection algorithms disagree on which concepts should be selected; humans prefer generic concepts, while \scoms prefer specific concepts which provide more information. Although it decreases the human cost, allowing humans to customise the concept sets originally selected algorithmically is likely to \emph{decrease} the prediction accuracy as well. Hence, there is a compromise between accuracy and cost of intervention/interpretation which ought to be considered when using \scoms in practice.  Note, we acknowledge that the participants who contributed to \cubsel were not necessarily experts in the task. Experts may have different preferences when selecting concept sets, potentially changing the overall prediction accuracy as well.

\subsection{Limitations}
While our approach holds promise for improving the usability of concept models for stakeholders, it is worth noting conditions which permit maximal effectiveness: \scoms work best on datasets with a large number of concepts (e.g., CUB) and rely on a well-trained output model. In cases where all the concepts are required for good prediction, smaller concept sets selected by \scom will suffer an accuracy penalty. Our selection algorithms aim to minimise output entropy which is only a good measure of MI when the output model is \emph{well-trained}. Since the output model learns a mapping from every concept subset to the output, it may converge slower during training than models which only consider one concept set. Moreover, so far our work has only studied \scoms in the classification setting; it is worth studying \scoms in other paradigms (e.g., regression) and over additional real-world concept datasets.

When removing concepts at test-time, we recognise that while this removes explicit dependence from the model, it does not completely cleanse implicit dependence on the removed concepts. Sensitive concepts may be highly correlated with other concepts which are not removed, meaning some bias may still be present in the model. Further, the output model was trained on all the concepts, which each affect the parameters of the model even if they are later removed from selection. For concept sets initially selected, it is assumed that all concepts are available for selection. Removing a concept at test time invalidates this assumption and reduces the efficacy of the concept selection procedure.


\subsection{Future Work}
One of the benefits of our prediction method, which uses arbitrary concept subsets, is that it allows stakeholder customisation at inference-time without retraining. The human experiment in this work was focused on evaluating human concept selection. In reality, the concept sets are likely to be selected by \scom, and then provided to human experts for interpretation and intervention. An exciting direction following this work is to validate these usability benefits of \scoms through further human experiments. By reducing the cost of expert interventions, \scoms facilitate end-to-end experiments involving concept selection, stakeholder customisation and expert intervention. Such experiments would test the use of concept models in real-world settings, paving the way for the deployment of \scoms in industry.

We encourage further investigation into discrepancies between human and algorithm concept preferences -- and potential for complementarity~\citep{bayesianComp} -- through \cubsel and related datasets such as \texttt{CUB-S} \citep{collins2023human}.
In addition, smarter intervention methods, such as CooP \citep{chauhan2022interactive}, could be used in conjunction with \scoms to further increase the efficacy of interventions.

\section{Conclusion}
In this work, we have introduced \scoms, an algorithmic framework which makes predictions using a selected concept set for a given task. \scoms provide a way of improving the intervenability and interpretability of concept based models, with no compromise on accuracy. On the task of bird species classification, optimal accuracy (76.1\%) was achieved using only 6 out of the available 28 concepts. Oracle interventions are more effective for the selected concept sets, demonstrating greater intervenability for \scoms compared to existing concept models. 

Using a human subject experiment we have shown that the concept sets humans choose often differ from the concepts which are most useful to the model. \scoms account for this difference by allowing humans to customise the selected concept sets at inference time \emph{without retraining}, according to their individual preferences. \scoms place no restrictions on the form of concepts selected or the output model architecture, allowing for easy integration with other techniques. We release the annotations from our human experiment in a new dataset, \cubsel, to encourage further investigation into the utility of \scoms in concert with real humans. 

\section{Acknowledgments}

We thank the participants from Prolific who took part in our human experiment. KMC acknowledges support from the Marshall Commission and the Cambridge Trust. AW  acknowledges  support  from  a  Turing  AI  Fellowship  under grant  EP/V025279/1,  The  Alan  Turing  Institute,  and  the Leverhulme Trust via CFI. UB  acknowledges  support  from  DeepMind  and  the  Leverhulme Trust  via  the  Leverhulme  Centre  for  the  Future  of  Intelligence  (CFI),  and  from  the  Mozilla  Foundation.

\bibliography{main}

\appendix
\section{Theorems}
\begin{theorem}
\label{theorem:output-entropy}
For a well-trained estimator, minimising the output entropy conditioned on the concepts, $H(\hat{Y}|C)$, is equivalent to maximising the MI, $I(Y; C)$:
\begin{equation}
    \argmincz H(\hat{Y}|C) = \argmaxcz I(Y; C)
\end{equation}
\end{theorem}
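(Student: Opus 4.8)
The plan is to reduce the statement to the standard identity $I(Y;C) = H(Y) - H(Y \mid C)$ together with the observation that a well-trained classifier reproduces the true conditional label distribution. First I would expand the mutual information as
\begin{equation}
    I(Y;C) = H(Y) - H(Y \mid C).
\end{equation}
The marginal entropy $H(Y)$ of the target depends only on the data distribution, not on which concept subset $\mathcal{C} \subset \mathcal{Z}$ is selected, so it is an additive constant in the objective and
\begin{equation}
    \argmaxcz I(Y;C) = \argmincz H(Y \mid C).
\end{equation}
It therefore suffices to show $H(\hat Y \mid C) = H(Y \mid C)$ (or at least that the two have the same minimisers over subsets).

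Next I would make precise the phrase \emph{well-trained estimator}: the output model $f$ defines, for each concept set $\mathcal{C}$ and each realised value $c$, a predictive distribution $p(\hat y \mid c)$ over labels, and ``well-trained'' means this distribution coincides with the true posterior, $p(\hat y \mid c) = p(y \mid c)$ for (almost) every $c$. This is the population minimiser of the cross-entropy loss used to train $f$ in Algorithm~\ref{alg:network}: for any fixed input the expected cross-entropy $-\mathbb{E}_{y \sim p(\cdot\mid c)}[\log q(y\mid c)]$ is uniquely minimised over distributions $q$ at $q = p(\cdot\mid c)$ by Gibbs' inequality, so a sufficiently expressive model trained to convergence recovers $p(y\mid c)$ on the support of $C$. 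Substituting this into the definition of the output entropy and using the tower rule over $C$ gives
\begin{equation}
    H(\hat Y \mid C) = -\mathbb{E}_{c}\sum_{y} p(\hat y \mid c)\log p(\hat y\mid c) = -\mathbb{E}_{c}\sum_{y} p(y\mid c)\log p(y\mid c) = H(Y\mid C),
\end{equation}
and hence $\argmincz H(\hat Y\mid C) = \argmincz H(Y\mid C) = \argmaxcz I(Y;C)$, as claimed.

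The main obstacle is the middle step --- justifying $p(\hat y\mid c) = p(y\mid c)$ --- and in the \scom setting there is an extra wrinkle: $f$ is trained simultaneously over all masks via the subsampling scheme of Section~\ref{sec:output-model}, so one must argue that convergence of the mask-averaged objective implies (near-)optimality conditional on each individual mask. Concretely, I would note that the expected loss $\mathbb{E}_{m}\,\mathbb{E}_{c}\big[-\log q(y \mid c \odot m, m)\big]$ decomposes as a nonnegative combination of per-subset cross-entropies, each independently minimised by the corresponding true posterior, so the joint minimiser restricts to the correct posterior on every subset. Beyond this caveat, and the usual finite-sample / finite-capacity slack hidden in ``well-trained'', the argument is just Gibbs' inequality plus the mutual-information chain rule; I would phrase the conclusion as an equality up to the approximation error of the estimator and remark that the empirical results support the approximation in practice.
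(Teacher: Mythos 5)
Your proof is correct and follows essentially the same route as the paper's: subtract the constant $H(Y)$, invoke the well-trained assumption $p(\hat y\mid c)=p(y\mid c)$ to identify $H(\hat Y\mid C)$ with $H(Y\mid C)$, and recognise the result as $-I(Y;C)$. The additional material you supply --- the Gibbs'-inequality justification that cross-entropy training recovers the true posterior, and the argument that the mask-averaged objective decomposes into per-subset cross-entropies each minimised by the correct posterior --- goes beyond what the paper proves (it simply assumes ``well-trained'') and is a worthwhile strengthening, but the core argument is identical.
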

\begin{proof}
Using the independence of $H(Y)$ with respect to $C$:
\begin{equation}
    \argmincz H(\hat{Y}|C) = \argmincz H(\hat{Y}|C) -  H(Y)
\end{equation}
For a well-trained estimator, $p(y|c) = p(\hat{y}|c)$, implying $H(Y|C) = H(\hat{Y}|C)$. Using this assumption:
\begin{equation}
    \argmincz H(\hat{Y}|C) = \argmincz H(Y|C) -  H(Y)
\end{equation}
Noting that the RHS is the negative MI:
\begin{equation}
    \argmincz H(\hat{Y}|C) = \argmincz - I(Y; C)
\end{equation}
Reversing the minus sign gives the required result:
\begin{equation}
    \argmincz H(\hat{Y}|C) = \argmaxcz I(Y; C)
\end{equation}
\end{proof}

\section{Further Results}
\begin{figure}[h!]
    \centering
    \includegraphics[width=\linewidth]{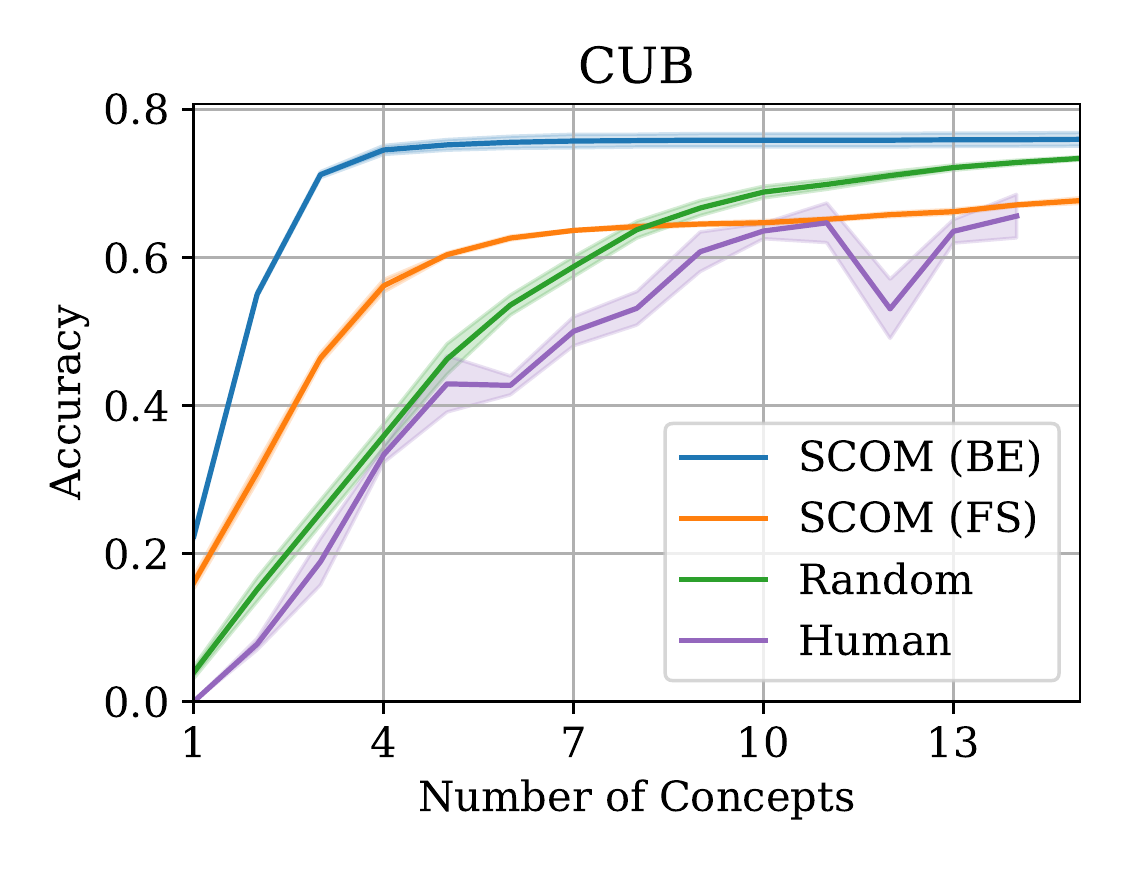}
    \caption{Prediction accuracy on the CUB dataset as a function of $k$. Accuracy using human selections is even worse than random, motivating the need for \scoms to provide algorithmic selections. Errors are $\pm 1\sigma$, calculated over random seeds which affects the concepts randomly selected and the concept model chosen.}
    \label{fig:human-acc}
\end{figure}

\begin{table}[h!]
\centering
\begin{tabular}{c|cccc}
\toprule
CUB & \multicolumn{4}{c}{Method} \\
$k$   & BE             & FS             & Human          & Random         \\ \midrule
1        & $22.4 \pm 1$   & $16.1 \pm 1$   & $0.0 \pm 0$    & $4.4 \pm 4$    \\
3        & $71.1 \pm 0.7$ & $46.4 \pm 1$   & $18.9 \pm 6$   & $26.9 \pm 7$   \\
 6        & $75.5 \pm 1$   & $62.6 \pm 1$   & $42.7 \pm 2$   & $55.5 \pm 6$   \\
12       & $75.8 \pm 2$   & $65.7 \pm 0.5$ & $53.0 \pm 0.7$ & $71.3 \pm 2$   \\
28 (All) & $75.3 \pm 0.6$ & $75.3 \pm 0.6$ & -              & $75.3 \pm 0.6$ \\ \bottomrule
\end{tabular}
\caption{Prediction accuracy on the CUB dataset for \scoms compared to the random and human benchmarks for selected values of $k$. Errors are $\pm 1 \sigma$, calculated over 3 random seeds which determine the concept logit predictions and the initial parameters of the output model.}
\label{tab:cub}
\end{table}

\begin{table}[h!]
\centering
\begin{tabular}{c|ccccc}
\toprule
CelebA & \multicolumn{4}{c}{Method} \\
$k$ & BE            & FS            & L2X          & Random       \\ \midrule
1        & $35.8 \pm 10$ & $30.0 \pm 10$ & $15.9 \pm 6$ & $8.8 \pm 2$  \\
5        & $60.8 \pm 8$  & $42.5 \pm 8$  & $45.8 \pm 6$ & $22.0 \pm 5$ \\
10       & $63.3 \pm 2$  & $45.8 \pm 9$  & $56.2 \pm 4$ & $35.5 \pm 5$ \\
20       & $65.0 \pm 2$  & $48.3 \pm 10$ & $62.9 \pm 2$ & $51.2 \pm 5$ \\
40 (All) & $63.8 \pm 3$  & $63.8 \pm 3$  & $63.8 \pm 3$ & $63.8 \pm 3$ \\ \bottomrule
\end{tabular}
\caption{Prediction accuracy on the CelebA dataset for \scoms compared to the random and L2X instance benchmarks for selected values of $k$. Errors are $\pm 1 \sigma$, calculated over 5 random seeds which determine the train/test split and the initial parameters of the output model.}
\label{tab:celeba}
\end{table}

\end{document}